\documentclass[11pt,a4paper]{article}
\usepackage{a4wide,amsfonts,amsmath,latexsym,amsthm,amssymb,euscript,eufrak,graphicx,units,mathrsfs, color,setspace}

\usepackage{float}
\newfloat{figure}{H}{lof}
\floatname{figure}{\figurename}



\usepackage[french,english]{babel}
\usepackage[T1]{fontenc}

\DeclareMathAlphabet{\eufrak}{U}{}{}{}  
\SetMathAlphabet\eufrak{normal}{U}{euf}{m}{n}
\SetMathAlphabet\eufrak{bold}{U}{euf}{b}{n}
\newcommand{\ndot}{\raisebox{.4ex}{.}}

\newtheorem{prop}{Proposition}[section]

\newtheorem{theorem}[prop]{Theorem}
\newtheorem{lemma}[prop]{Lemma}
\newtheorem{corollary}[prop]{Corollary}

\newtheorem{assumption}[prop]{Assumption}

\theoremstyle{definition}
\newtheorem{example}[prop]{Example}
\newtheorem{remark}[prop]{Remark}

\numberwithin{equation}{section}

\def\E{\mathbb{E}}
\def\P{\mathbb{P}}
\def\real{\mathbb{R}}

\def\F{\mathcal{F}}
\def\1{\textbf{1}}
\def\ind#1{\textbf{1}_{\{#1\}}}

\def\eqlaw{\overset{\mathcal{L}}{=}}

\newcommand{\eps}{\varepsilon}


\newcommand{\be}{\begin{equation}}
\newcommand{\ee}{\end{equation}}
\newcommand{\bde}{\begin{displaymath}}
\newcommand{\ede}{\end{displaymath}}
\newcommand{\beq}{\begin{eqnarray*}}
\newcommand{\eeq}{\end{eqnarray*}}
\newcommand{\beqa}{\begin{eqnarray}}
\newcommand{\eeqa}{\end{eqnarray}}
\newcommand{\bel }{\left\{\begin{array}{ll}}
\newcommand{\eel}{\cr \end{array} \right.}

\newcommand{\bex}{\begin{ex} \rm }
\newcommand{\eex}{\end{ex}}



\def\R{\mathbb R}
\def\N{\mathbb N}
\def\E{\mathbb E}
\def\F{{\cal F}}

\def\P{\mathbb P}



\def\cal#1{\mathcal{#1}}

\def\b{\textcolor{blue}}

\definecolor{ying}{rgb}{0.8, 0.0, 0.04}

\DeclareSymbolFontAlphabet{\mathrsfs}{rsfs}

\author{Caroline Hillairet\footnote{ENSAE Universit\'e Paris Saclay, CREST,
5  avenue Henry Le Chatelier
91120 Palaiseau, France. \; Email: \texttt{caroline.hillairet@ensae.fr}} \and Ying Jiao\footnote{Universit\'e Claude Bernard - Lyon 1, Institut de Science Financi\`ere et d'Assurances, 69007 Lyon France. \; Email: \texttt{ying.jiao@univ-lyon1.fr}} \and Anthony R\'eveillac\footnote{INSA de Toulouse, IMT UMR CNRS 5219, Universit\'e de Toulouse, 135 avenue de Rangueil 31077 Toulouse Cedex 4 France. \; Email: \texttt{anthony.reveillac@insa-toulouse.fr}}}

\title{Pricing formulae for derivatives in insurance using the Malliavin calculus\footnote{The authors acknowledge Projet PEPS \'egalit\'e (part of the European project INTEGER-WP4) "Approximation de Stein : approche par calcul de Malliavin et applications \`a la gestion des risques financiers" for financial support.}}

\begin{document}

\maketitle

\allowdisplaybreaks

\begin{abstract}
\noindent
In this paper we provide a valuation formula for different classes of actuarial and financial contracts which depend on a general loss process, 
by using the Malliavin calculus. In analogy with the celebrated Black-Scholes formula, we aim at expressing the expected cash flow in terms of a building block. The 
former is related to the loss process which is a cumulated sum indexed by a doubly stochastic Poisson process of claims allowed to be dependent on the intensity and the jump times of the counting process. For example, in the context of Stop-Loss contracts the building block is given by the distribution function of the terminal cumulated loss, taken at   the Value at Risk when computing the Expected Shortfall risk measure.
\end{abstract}

\section{Introduction}


Risk analysis in the context of insurance or reinsurance is often based on the study of properties of a so-called \textit{cumulative loss process} $L:=(L_t)_{t\in[0,T]}$ over a period of time $[0,T]$ where $T>0$ denotes the maturity of a contract. Usually, $L$ takes the form
$$ L_t:=\sum_{i=1}^{N_t} X_i, \quad t\in [0,T],$$
where $N:=(N_t)_{t\in[0,T]}$ is a counting process, and the random variables $(X_i)_{i\in\N^*}$ represent the amount of the claims. A typical contract in reinsurance is the \textit{Stop-Loss contract} that offers protection against an increase in either (or both) severity and frequency of a company's
loss experience.
More precisely,  Stop-loss contracts provide to its buyer (another insurance company) the protection against losses which are larger than a given level $K$ and its payoff function is given by a ``call'' function. In some cases, there is also an upper limit given by some real number $M$, which specifies the maximal reimbursement amount. Thus the payoff of such a contract is given by
\begin{equation}\label{payoff stop loss}\Phi(L_T)= \begin{cases} 0, \; &\textrm{ if } L_T< K;\\ L_T-K,  &\textrm{ if } K \leq L_T < M;\\ M {-K},  &\textrm{ if } L_T\geq M. \end{cases}\end{equation}
In full generality the risk carried out by the claims is neither hedgeable nor related to a financial market, hence the premium of the Stop-Loss is equal to $\E[\Phi(L_T)]$ which immediately re-writes as
\begin{equation}
\label{eq:intropremium}
\E[\Phi(L_T)] = \E\left[L_T \ind{L_T \in [K,M]}\right] - K \P\left[L_T \in [K,M]\right] + (M-K) \P\left[L_T \geq M\right].
\end{equation}
There is a large number of papers describing how to approximate the compound distribution function of the cumulated loss $L_T$,
 and to compute the Stop-Loss premium. The aggregate claims distribution function can in
some cases be calculated recursively, using, for
example, the Panjer recursion formula, see Panjer  \cite{panjer1981}
and  Gerber \cite{gerber1982}. Various approximations of Stop-Loss reinsurance premiums are described in the literature, some of them assuming a specific dependence structure.\\
In analogy with the celebrated Black-Scholes formula, we aim in this paper to express the first term of the right-hand side of (\ref{eq:intropremium}) in terms of a building block which represents the distribution function of the terminal loss $L_T$. This feature is hidden in the Black-Scholes model since the terminal value of the stock has an explicit lognormal distribution. More specifically, we aim in computing $\E\left[L_T \ind{L_T \in [K,M]}\right]$ by using the building block $x\mapsto\P\left[L_T \in [K-x,M-x]\right]$. Note that, on the credit derivative market, the payoff function \eqref{payoff stop loss} can also be related to Collateralized Debt Obligations (CDOs) where there are several tranches, and so several $K$ and $M$ levels, which are expressed in proportion of the underlying which is the loss of a given asset portfolio.\\\\
\noindent
Stop-Loss contracts are the paradigm of reinsurance contracts, but we aim in dealing with more general payoffs whose valuation involves the computation of the quantity
\begin{equation}
\label{eq:generalpayoff}
\E\left[\hat L_T h\left(L_T\right)\right],
\end{equation}
where $h:\real_+\to \real_+$ is a Borelian map and where $\hat L$ is of the form $ \hat L_T := \sum_{i=1}^{N_T} \hat X_i, $
involving claims $\hat X_i$ which are related to the ones $X_i$ of the original loss $L_T$. To be more precise, $\hat L_T$ will be the effective loss covered by the reinsurance company whereas $L_T$ is the loss quantity that activates the contract. Typical examples will be given in Section \b{\ref{subsec:lossprocess}}. Once again, this is similar to the valuation of CDOs tranches where the recovery rate is often supposed to be a random variable of beta distribution with mean $40\%$ 
whereas the realized rate, often revealed only after the formal bankruptcy, does not necessarily match with this value
.\\\\
\noindent
In this paper we provide an exact formula for (\ref{eq:generalpayoff}) in terms of the building block $x\mapsto \E\left[h(L_T+x)\right]$ (or of a related quantity for the more general situation (\ref{eq:generalpayoff}), see (\ref{eq:definitionvarphi}) for a precise statement). This goal will be achieved by using one of the Malliavin calculus available for jump processes. Before turning to the exposition of the model, we emphasize that this methodology goes beyond the analysis of pricing and finds for instance application in the computation of the Expected Shortfall of contingent claims in the realm of risk measures. Indeed, the expected shortfall is a useful risk measure, that takes into account the size of the expected loss above the value at risk. Formally it is defined as
$$ES_{\alpha}(-L_T) = \E\left[ -L_T\middle\vert -L_T > V@R_{\alpha}(-L_T) \right], \quad \alpha \in (0,1).$$
As it is well-known, the expected shortfall coincides with Average Value at Risk (AV@R), that is
$$ ES_{\alpha}(-L_T) =  AV@R(-L_T):=\frac{1}{1-\alpha} \int_{\alpha}^1 V@R_s(-L_T) ds, $$
if and only if $\P[-L_T\leq q_{-L_T}^+(t)]=t$, $t\in (0,1)$, where $q_{-L_T}^+(t)$ denotes the quantile of level $t$ of $-L_T$ (see Section \ref{section:ES} for a precise definition). However, already in the trivial example where the size claims $X_i$ are constant equal to $1$, this property fails as $L_T=N_T$ is a Poisson random variable which exhibits a discontinuous distribution function. However, our approach gives an alternative explicit computation of $\mathbb E[L_T\ind{L_T<\beta}]$ and thus of $ES_{\alpha}(-L_T)$ as
\[ES_{\alpha}(-L_T) =\frac{-\mathbb E[L_T\ind{L_T<\beta}]}{\mathbb P(L_T<\beta)}, \quad \beta:=-V@R_{\alpha}(-L_T). \]

\noindent
We conclude this section with some comments about the modeling of the claims $X_i$ and $\hat X_i$. In the classic Cramer-Lundberg model, the claims are independent and identically distributed (i.i.d.) and in addition independent of the counting process $N$ which happens to be an inhomogeneous Poisson process. In this work we consider a doubly stochastic Poisson process $N$ and we allow dependency between the size of the claims, their arrivals and the intensity of $N$. In particular we do not assume a Markovian setting.
The impact of  certain dependence structure on  the Stop-Loss premium is studied in the reinsurance literature, such as in Albers  \cite{albers1999}, Denuit  et al. \cite{denuit2001} or   De Lourdes Centeno \cite{de2005dependent}, but those works usually assume dependency between the  successive claim sizes and
 the arrival intervals.
Nevertheless, in the ruin theory literature, some contributions already propose explicit dependencies among inter-arrival times  and  the claim sizes, such as  Albrecher and Boxma  \cite{albrecher2004ruin}, Boudreault, Cossette, Landriault and Marceau \cite{boudreault2006risk} and related works. A general framework of dependencies is proposed  by  Albrecher,  Constantinescu, and Loisel \cite{albrecher2011explicit} in which  the dependence arises via mixing through a so-called frailty parameter. Recently, Albrecher et al.  \cite{albrecher2017queue} extend duality results that
relate survival and ruin probabilities in the insurance risk model to waiting time
distributions in the 'corresponding' queueing model.
 The risk processes    have a counterpart in workload models of queueing theory,
and a similar mixing dependencies  structure is considered  in a queueing context. Besides, our framework extends the mixing approach of \cite{albrecher2011explicit} and  \cite{albrecher2017queue} by allowing non-exchangeable family of random variables for the claims amounts. In a similar way, in the credit risk modeling we can also suppose that the recovery rate depend on the underlying default intensity such as in Bakshi, Madan and Zhang \cite{BMZ2006}.\\\\
\noindent
We proceed as follows. We first make clear in Section \ref{section:model} our model for the loss process and present the insurance contracts for which we will propose a pricing formula. The latter will be stated and proved as Theorem \ref{th:main} in Section \ref{section:main}. Particular cases of this result to several types of contracts in insurance are also given in this section. Finally, explicit examples are presented in Section \ref{section:examples}.

\section{Model Setup}
\label{section:model}
In this section, we describe the loss process and the associated reinsurance contracts we will study. Throughout this paper, $T$ will denote a positive finite real number which represents the final horizon time.
\subsection{The Loss process}\label{subsec:lossprocess}
We begin by introducing the loss process $L:=(L_t)_{t\in[0,T]}$ where the size of claims and their arrival times are correlated.  Let  $(N_t)_{t\in[0,T]}$ be  a Cox process (also called doubly stochastic Poisson process)  with random intensity $(\lambda_t)_{t\in[0, T]}$,  whose jump times, denoted by $(\tau_i)_{i\in \N^*}$, model the arrival times of the claims. We suppose that the claim size $X_i$ depends on both  the cumulated intensity  defined by $\Lambda_t:=\int_0^t \lambda_s ds$ and the claim arrival time $\tau_i$. Moreover, it will also depend on some random variable $\eps_i$ where we suppose that $(\eps_i)_{i\in\N^*}$ is a sequence of positive i.i.d. random variables  independent of the Cox process $N$. More precisely, the loss is given by
 \begin{equation}
\label{eq:L}
L_t := \sum_{i=1}^{N_t} X_i \, e^{-\kappa (t-\tau_i)}, \quad \mbox{ with } X_i:=f(\tau_i, \Lambda_{\tau_i},\eps_i), \quad t\in [0,T],
\end{equation}
where  $\kappa$ is the  discounting factor and $f:\real_+^3 \to \real_+$ is a bounded  deterministic function. We provide several examples as below.
\begin{example}
\begin{enumerate}
\item In the classic ruin theory, the claim size is often supposed to be independent of the arrival and the intensity process. In this case, we have   $f(t, \ell, x)= x$.

\item In the second example, we suppose that the dependence of $f$ on the exogenous factor $\eps$ is linear and the linear coefficient is a function of the cumulated intensity $\Lambda$ rescaled by time, \textit{i.e.}, $\frac{\Lambda_t}{t}$, which stands for some mean level of the intensity. For instance,  let $$f(t,\ell,x)= \sqrt{\frac{\ell}{t}} x. $$
In this example, if $\varepsilon_i$ follows an exponential distribution with parameter $1$, then  $X_i=f(\tau_i,\Lambda_{\tau_i}, \varepsilon_i)$ follows an exponential distribution with parameter $ \sqrt{\frac{\tau_i}{\Lambda_{\tau_i}}}$ conditionally to the vector $(\tau_i,\Lambda_{\tau_i})$.
\end{enumerate}
\end{example}

\subsubsection{Generalized loss process}

We can also consider a more general case where the realized  claim sizes $(X_i)_{i\in\N^*}$ are not exactly the ones that are  computed to activate the reinsurance contract.  More precisely, assume that in
addition to the  factors $(\varepsilon_i)_{i\in\N^*}$, there exists  a family of i.i.d. positive random variables $(\vartheta_i)_{i\in\N^*}$ which may depend on the random variables $\eps_i$'s.
Let $g:\real_+^4\to \real_+$ be a deterministic bounded function.
We can define a modified cumulative loss process as
\begin{equation}
\label{eq:lossbis}
\hat L_t := \sum_{i=1}^{N_t} g(\tau_i, \Lambda_{\tau_i},\eps_i,\vartheta_i) e^{-\kappa(t-\tau_i)}, \quad t\in [0,T].
\end{equation}
\noindent
More precisely, although the insurance contract is triggered by the loss process $L$, the compensation amount can depend on some other exogenous factors $(\vartheta_i)_{i\in\N^*}$. This would mean for instance that the amounts  $\vartheta_i$'s are much lower than the $\eps_i$'s. A typical example is given by the housing insurance market on the American East Coast. Indeed, this region is seasonally exposed to hurricanes of different magnitudes. Most of the damages impacts the houses of the insured who may as well buy contracts on other belongings such as cars which are much less valuable. After a hurricane episode, the re-insurance Stop-Loss contract will be activated on the basis of the total damages $L_T$ on the houses (which are represented by the claims $\eps_i$) whereas the effective damages $\hat L_T$ will also include all other insured belongings (which would be modeled by the $\vartheta_i$). In the special case where the function $g$ does not depend on the fourth variable, the general loss $\hat L_t$ reduces to the standard loss defined in \eqref{eq:L}. We give below some examples of the joint distribution $(\eps_i,\vartheta_i)$.

\begin{example}
\begin{enumerate}
\item The first natural case is that $\varepsilon_i$ and $\vartheta_i$ are independent random variables. For example, each of them can follow an exponential distribution (or Erlang distribution) with different positive parameters $\theta_1$ and $\theta_2$.

\item We can introduce dependence between $\varepsilon_i$ and $\vartheta_i$ by using the mixing method in \cite{albrecher2011explicit}. Let  $\varepsilon_i$ and $\vartheta_i$ follow Pareto marginal distributions respectively and a dependence structure  according to a Clayton copula (according to Example 2.3 in  \cite{albrecher2011explicit}, this can be achieved by mixing the two Pareto marginal distributions where the mixing parameter follows a Gamma distribution).

\item Case of explicit dependence : let $\varepsilon_i$ follow a Pareto distribution and $\vartheta_i$ follow a Weibull distribution with form or scaling parameter depending of  $\varepsilon_i$.

\end{enumerate}
\end{example}

\subsection{Reinsurance contracts and related quantities}
\subsubsection{{ Generalized Stop-loss Contrats}}\label{section:stoploss}
We have seen in the introduction the Stop-Loss contract whose payoff is given by $\Phi(L_T)$ where $\Phi$ has been defined in (\ref{payoff stop loss}) and corresponds to a call spread, that is, the difference of two call functions.
Our approach allows us to go beyond the case of the Stop-Loss contract. Consider now a contract where the reinsurance company pays
\begin{equation}\label{payoff generalized stop loss} \widetilde\Phi(L_T,\hat L_T)=\begin{cases} 0, &\textrm{ if } L_T\leq K\\
\hat L_T-K, &\textrm{ if } K \leq L_T \leq M\\
M -K, &\textrm{ if } L_T\geq M \end{cases},\end{equation}
with $\hat L_T$ defined in \eqref{eq:lossbis} if the \textit{a priori} loss $L_T$ excesses some amount $K$ or belongs to some interval $[K,M]$.
Then the price of such a contract is :
\begin{equation}\label{decomposed general stop loss} \E\left[\hat L_T \ind{L_T>K}\right] - K \P\left[L_T \in [K,M]\right] + (M-K) \P\left[L_T \geq M\right]. \end{equation}

%
%

\subsubsection{Expected Shortfall}
\label{section:ES}
The expected shortfall is a useful risk measure which takes into account the size of the expected loss above the value at risk.
We recall the Expected Shortfall with level $\alpha$ as
  $$ES_{\alpha}(-L_T) = \E\left[ -L_T\middle\vert -L_T > V@R_{\alpha}(-L_T) \right], \quad \alpha \in (0,1).$$
where the definition of $V@R$ is\[V@R_{\alpha}(X)=-q_X^+(\alpha)=q_{-X}^-(1-\alpha)\]with
\[q_X^+(t)=\inf\{x|\,\mathbb P[X\leq x]>t\}=\sup\{x|\,\mathbb P[X<x]\leq t\}\]
\[q_X^-(t)=\sup\{x|\,\mathbb P[X<x]<t\}=\inf\{x|\,\mathbb P[X\leq x]\geq t\}.\]
\noindent
It is well known that $ES_{\alpha}(X)$ is equal to $AV@R(X):=\frac{1}{1-\alpha} \int_{\alpha}^1 V@R_s(X) ds  $ if and only if $\P[X\leq q_{X}^+(t)]=t$, $t\in(0,1)$, which is in particular satisfied if the distribution function of $X$ is continuous (see \textit{e.g.} \cite[Relation (4.38)]{Follmer}).  However, the latter property fails  already in the case where the size claims $X_i$ are constant. Thus one can not rely on the above relation and has to compute directly the conditional expectation $ES_{\alpha}(-L_T)$.

We will provide an alternative expression for the expected shortfall. We denote by $\beta:=-V@R_{\alpha}(-L_T)$, then
\[ES_{\alpha}(-L_T) =\frac{-\mathbb E\left[L_T\ind{L_T<\beta}\right]}{\mathbb P[L_T<\beta]}\]
where
\[\beta=q^+_{-L_T}(\alpha)=\inf\{x|\,\mathbb P[L_T>-x]>\alpha.\}\]
So once again the key term to compute turns out to be the expectation $\mathbb E\left[L_T\ind{L_T<\beta}\right]$.

\subsection{General payoffs}

More generally, we are interested in computing quantities of the form
$$ \E\left[\hat L_T h\left(L_T\right)\right],$$
where $h:\real_+\to \real_+$ is a Borelian map with $\E[h(L_T)]<\infty$. Since in  our model, the counting process is given by a Cox process with stochastic intensity, the building block becomes the following mapping by using the conditional expectation
$$x\mapsto \E\left[h(L_T+x)\vert (\lambda_t)_{t\in[0,T]}\right].$$ Note that the examples of Section \ref{section:stoploss} (respectively of Section \ref{section:ES}) are contained in this setting by choosing $h:=\textbf{1}_{[K,M]}$ for some $-\infty\leq K<M\leq +\infty$ (respectively $h:=\textbf{1}_{[-\infty,\beta]}$ and $\hat L_T=L_T$).\\\\
\noindent
Our approach calls for some stochastic analysis material that we present in the next section.

\section{The pricing formulae using the Malliavin calculus}
\label{section:main}
In this section, we establish our main pricing formulae by using the Malliavin calculus.  To this end, we first make precise the Poisson space associated to the loss process. Then we provide basic tools for the Malliavin calculus.

\subsection{Construction of the Poisson space}

\subsubsection{The counting process and intensity process}

We recall that the loss process involves the Cox process $(N_t)_{t\in[0, T]}$ with its intensity and jump times, and the family of random variables $(\eps_i)_{i\in\N^*}$.  We begin by introducing a general counting process which will be useful for the construction of $(N_t)_{t\in[0, T]}$ on a suitable space. Let $\Omega_1$ be the set of  (finite or infinite) strictly increasing sequences in $]0,+\infty[$\,. We define a continuous-time stochastic process $\mathcal C$ on the set $\Omega_1$ as
\[\forall\,(t,\omega_1)\in [0,+\infty[\times\Omega_1,\quad \mathcal C_t(\omega_1):=\mathrm{card}([0,t]\cap\omega_1).\]Let $\mathbb F^{\mathcal C}=(\mathcal F_t^{\mathcal C})$ be the filtration generated by the process $\mathcal C$, namely $\mathcal F_t^{\mathcal C}:=\sigma(\mathcal C_s,\,s\leq t)$. It is known that there exists a unique probability measure $\mathbb P_1$ on $(\Omega_1,\mathcal F_\infty^{\mathcal C})$ under which the process $\mathcal C$ is a Poisson process of intensity $1$, that is, for every $(s,t) \in [0,+\infty)^2$, with $s<t$, the random variable $\mathcal{C}_t-\mathcal{C}_s$ is independent of $\F_s^{\mathcal{C}}$ and Poisson distributed with parameter $t-s$.\\\\
\noindent
We then consider a probability space $(\Omega_2,\mathcal{A},\P_2)$  on which is defined :
\begin{itemize}
\item[(i)] a positive stochastic process $(\lambda_t)_{t\in [0,T]}$ such that
$$ \int_0^T \lambda_s ds<+\infty, \;\;\; \P_2\text{ - a.s.}. $$
\item[(ii)] a collection of i.i.d. $\R_+^2$-valued bounded random variables $(\eps_i,\vartheta_i)_{i\in\mathbb N^*}$ and a $\R_+^2$-random variable $(\overline \eps,\overline \vartheta)$ independent from $(\eps_i,\vartheta_i)_{i\in\mathbb N^*}$, with $(\overline{\eps},\overline \vartheta) \eqlaw (\eps_1,\vartheta_1)$ (where $\eqlaw$ stands for the equality of probability distributions). We set $\mu$ the law of the pair $(\overline{\eps},\overline \vartheta)$.
\end{itemize}

\begin{assumption}
\label{ass:1}
We assume that $\lambda$ is independent of $(\eps_i,\vartheta_i)_{i\in\mathbb N^*}$, and of $(\overline{\eps},\overline{\vartheta})$.
\end{assumption}
 $\mathbb F^\lambda=(\mathcal F_t^\lambda)_{t\in[0,T]}$ be the right-continuous complete filtration generated by the stochastic process $\lambda$. Moreover, we set
\begin{equation}
\label{eq:Lambda}
\Lambda_t:=\int_0^t \lambda_s ds, \quad t\in [0,T].
\end{equation}
Let $\mathcal F^{\varepsilon,\vartheta}$ be the $\sigma$-algebra generated by $(\varepsilon_i)_{i\in\mathbb N^*}$ and $(\vartheta_i)_{i\in\mathbb N^*}$. Note that only $(\varepsilon_i)_{i\in\mathbb N^*}$ and $(\vartheta_i)_{i\in\mathbb N^*}$ will be involved in the loss process and $\overline{\varepsilon}$ and $\overline{\vartheta}$ are just independent copies which play an auxiliary role. We denote by $\mu$ the probability law of the couple $(\varepsilon_i,\vartheta_i)$.

\begin{assumption}
\label{ass:2}
Throughout this paper, we assume that : $ \Lambda_T <+\infty, \; \P_2 - a.s.. $
\end{assumption}

\subsubsection{The doubly stochastic Poisson process}

We now consider the product space $(\Omega:=\Omega_1\times\Omega_2,\F:=\F^{\mathcal{C}}_{\infty}\otimes\mathcal{A},\P:=\P_1\otimes\P_2)$. By abuse of notation, any random variable $Y$ on $\Omega_1$ can be considered as a random variable on $\Omega$ which sends $\omega=(\omega_1,\omega_2)$ to $Y(\omega_1)$. Similarly, any random variable $Z$ on $\Omega_2$ can be considered as a random variable on $\Omega$ which sends $\omega=(\omega_1,\omega_2)$ to $Z(\omega_2)$.\\\\
\noindent
We define a counting process $N:=(N_t)_{t \in[0,T]}$ on $\Omega$ by using a time change as
$$N_t(\omega_1,\omega_2):=\mathcal{C}_{\Lambda_t(\omega_2)}(\omega_1) = \mathcal{C}_{\int_0^t \lambda_s(\omega_2) ds}(\omega_1), \quad t\in [0,T], \; (\omega_1,\omega_2)\in \Omega.$$
%
Note that for any $t$, $N_t$ is $\F_\infty^{\mathcal C}\otimes\mathcal F^\lambda_T$-measurable random variable.
Moreover, for any fixed  $\omega_2$ in $\Omega_2$, $N_t(\ndot,\omega_2)$ is an inhomogeneous Poisson process on $\Omega_1$ with intensity $t\mapsto \lambda_t(\omega_2)$ with respect to the filtration $(\F^{\mathcal C}_{\Lambda_t(\omega_2)})_{t\in[0,T]}$
which reads as\footnote{By a slight abuse of notation, $ \E\left[\cdot\middle\vert \F_T^\lambda\right] := \E\left[\cdot\middle\vert \F_0^{\mathcal C}\otimes\F_T^\lambda\right]$ and $ \E\left[\cdot\middle\vert \F_T^\lambda \vee \F^{\varepsilon,\vartheta}\right] := \E\left[\cdot\middle\vert \F_0^{\mathcal C}\otimes(\F_T^\lambda\vee \F^{\varepsilon,\vartheta})\right]$.}
$$ \E\left[e^{iu (N_t-N_s)} \middle\vert \F^\lambda_s \right] = \E\left[\exp\left((e^{iu}-1)\int_s^t \lambda_r dr \right)\middle\vert \F^\lambda_s \right], \quad 0\leq s<t\leq T,$$
where $\E$ denotes the expectation with respect to the measure $\P$. For a process $(u_t)_{t\in[0,T]}$ such that :
\begin{equation}
\label{eq:processbien}
\left\lbrace
\begin{array}{l}
u_t \textrm{ is } \F\textrm{-measurable}, \quad t \in[0,T],\\
\textrm{for a.e. } \omega_2 \in \Omega_2, (u_t(\cdot,\omega_2))_{t\in [0,T]} \textrm{ is } (\F_{\Lambda_t(\omega_2)}^{\mathcal C})_{t\in [0,T]}\textrm{-predictable},\\
 \E\left[\int_0^T |u_t| dt\right]<+\infty,
\end{array}
\right.
\end{equation}
we denote by $\left(\int_0^T u_s dN_s\right)(\omega_1,\omega_2)$ the Lebesgue-Stieltjes integral of $u(\omega_1,\omega_2)$ against the measure $N(\omega_1,\omega_2)$.\\\\
\noindent
For any $i\in\mathbb N$, we let $\tau_i$ be the $i$-th jump time of the process $N$, namely
\[\forall\,\omega=(\omega_1,\omega_2)\in\Omega,\quad \tau_i (\omega):= \inf\{t>0, \; N_t=\mathcal{C}_{\Lambda_t(\omega_2)}(\omega_1) \geq i \},\]
with the convention $\tau_0=0$.

\subsection{The Malliavin integration by parts formula}
We can now state the Malliavin integration by parts formula on the product space.
For any $t\in[0,T]$, and $\omega_1\in\Omega_1$ which is of finite length or has a limit greater than $t$, we define $\omega_1 \cup \{t\}$ in $\Omega_1$ as the increasing sequence whose underlying set is the union of $\omega_1$ and $t$. The effect of this operator is to add a jump at time $t$ to the Poisson process $N$. Finally, for $\omega:=(\omega_1,\omega_2)\in\Omega$, and $t\in[0,T]$, we set
$$\omega \cup \{t\} :=(\omega_1 \cup \{t\},\omega_2),$$
provided that $\omega_1\cup\{t\}$ is well defined.
\def\empty{
\begin{equation}
\label{eq:addjump1}
(\omega_1 \cup \{t\})(A):= \omega_1(A\setminus \{t\}) + \textbf{1}_{A}(t), \quad A\in \mathcal{B}([0,T]),
\end{equation}
where
$$ \textbf{1}_{A}(t):=\left\lbrace \begin{array}{l} 1, \quad \textrm{if } t\in A,\\0, \quad \textrm{else.}\end{array}\right. $$
\noindent}
The following lemma is a direct extension of the one presented for example in \cite[Corollaire 5]{Picard_French_96} or \cite{Picard_PTRF96} (see also \cite{Privault_LectureNotes}).

\begin{lemma}
\label{Lemma:IPP}
Let $u:\Omega\times[0,T] \to \R$ be a stochastic process which enjoys (\ref{eq:processbien}),
and $F:\Omega \to \real$ be a bounded $\F$-measurable random variable. Then the stochastic process $(\omega,t) \mapsto F(\omega \cup \{t\})$ is well-defined $\P\otimes dt$-a.e. and
\begin{equation}
\label{eq:IPP}
\E\left[F \int_0^T u_s dN_s \middle\vert \F_T^\lambda \vee \F^{\eps,\vartheta}\right] = \E\left[\int_0^T u_t \; F(\cdot\cup \{t\}) \lambda_t dt \middle\vert \F_T^\lambda \vee \F^{\eps,\vartheta} \right].
\end{equation}
\end{lemma}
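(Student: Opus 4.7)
The plan is to reduce the identity to the classical Malliavin integration by parts formula for a standard (intensity-one) Poisson process on $\Omega_1$ via conditioning plus the time change $s = \Lambda_t(\omega_2)$. Under Assumption~\ref{ass:1}, the $\sigma$-algebra $\F_T^\lambda \vee \F^{\eps,\vartheta}$ is independent of $\F_\infty^{\mathcal{C}}$, so conditioning on it amounts to fixing $\omega_2$ together with the realization of $(\eps_i,\vartheta_i)_{i\in\N^*}$ and then integrating only with respect to $\P_1$. For such a fixed $\omega_2$, the counting process $N_{\cdot}(\cdot,\omega_2) = \mathcal{C}_{\Lambda_{\cdot}(\omega_2)}$ is an inhomogeneous Poisson process on $(\Omega_1,\P_1)$ with deterministic intensity $t \mapsto \lambda_t(\omega_2)$, and the predictability requirement in \eqref{eq:processbien} says that $u(\cdot,\omega_2)$ is $(\F_{\Lambda_t(\omega_2)}^{\mathcal{C}})_t$-predictable.

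First I would recall the classical Picard--Privault integration by parts formula for the standard Poisson process $\mathcal{C}$ on $(\Omega_1,\P_1)$, namely that for a bounded $\F_\infty^{\mathcal{C}}$-measurable functional $G$ and a suitably integrable $(\F_s^{\mathcal{C}})_s$-predictable process $(v_s)_{s\ge 0}$ one has
\begin{equation*}
\E_{\P_1}\!\left[G \int_0^{\infty} v_s\, d\mathcal{C}_s\right] = \E_{\P_1}\!\left[\int_0^{\infty} v_s\, G(\omega_1\cup\{s\})\, ds\right],
\end{equation*}
which is exactly the content of the references \cite{Picard_French_96,Picard_PTRF96,Privault_LectureNotes} cited in the paper. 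Second, I would apply this formula for fixed $(\omega_2,(\eps_i,\vartheta_i))$ to $G(\omega_1) := F(\omega_1,\omega_2)$ (cut off at time $\Lambda_T(\omega_2)$, which is a.s. finite by Assumption~\ref{ass:2}) and to $v_s := u_{\Lambda^{-1}(s)}(\omega_1,\omega_2)\,\mathbf{1}_{[0,\Lambda_T(\omega_2)]}(s)$, where $\Lambda^{-1}$ denotes a right-continuous inverse of $\Lambda(\omega_2)$. By construction $\int_0^T u_t\, dN_t = \int_0^{\Lambda_T(\omega_2)} v_s\, d\mathcal{C}_s$, and the predictability of $v$ is inherited from the assumption on $u$.

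Third, I would perform the change of variables $s = \Lambda_t(\omega_2)$, $ds = \lambda_t(\omega_2)\, dt$, in the right-hand side: adding a point at $s=\Lambda_t(\omega_2)$ to $\omega_1$ translates, by definition of $N$ as the time-changed process, to adding a jump of $N$ at $t$, i.e.\ to replacing $\omega$ by $\omega\cup\{t\}$. This yields
\begin{equation*}
\E_{\P_1}\!\left[F \int_0^T u_s\, dN_s\right] = \E_{\P_1}\!\left[\int_0^T u_t\, F(\omega\cup\{t\})\, \lambda_t(\omega_2)\, dt\right]
\end{equation*}
for a.e.\ $\omega_2$ and a.e.\ realization of $(\eps_i,\vartheta_i)$. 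Integrating these two sides against the conditional law given $\F_T^\lambda\vee\F^{\eps,\vartheta}$ gives \eqref{eq:IPP}.

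The main technical difficulty I anticipate is the bookkeeping around the time change when $\Lambda$ is only nondecreasing (not strictly increasing): on flat pieces of $\Lambda$ no jumps of $N$ occur, so the substitution $s=\Lambda_t$ must be made with care, typically by defining $\Lambda^{-1}$ as a right-continuous inverse and verifying that both the predictability of $v$ and the identification $F(\omega_1\cup\{\Lambda_t(\omega_2)\},\omega_2) = F(\omega\cup\{t\})$ hold for $\P\otimes dt$-a.e.\ $(\omega,t)$ (which is also why the statement says $(\omega,t)\mapsto F(\omega\cup\{t\})$ is well defined only $\P\otimes dt$-a.e.). Everything else is a routine transfer of the known Poisson-space formula to the product space.
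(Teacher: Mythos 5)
Your argument is correct and is essentially the route the paper itself intends: the paper gives no proof of Lemma \ref{Lemma:IPP}, merely invoking the classical Picard--Mecke integration by parts formula for the unit-intensity Poisson process as a ``direct extension'', and your scheme --- prove the identity $\omega_2$-wise (then condition on $\F_T^\lambda\vee\F^{\eps,\vartheta}$ by the tower property), apply the classical formula on $(\Omega_1,\P_1)$, and transfer it through the time change $s=\Lambda_t(\omega_2)$ --- is exactly how that extension is carried out; note only that the independence of $\F_\infty^{\mathcal C}$ from $\F_T^\lambda\vee\F^{\eps,\vartheta}$ comes from the product structure $\P=\P_1\otimes\P_2$ rather than from Assumption \ref{ass:1}. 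The one delicate point is the one you already flag: with the paper's literal definition $\omega\cup\{t\}=(\omega_1\cup\{t\},\omega_2)$ the stated identity would not hold, since adding the point $t$ to the configuration of $\mathcal C$ creates a jump of $N$ at $\Lambda^{-1}(t)$ rather than at $t$; the formula is valid under the intended reading (the one actually used in Lemma \ref{lemme:temp1lemma}), in which $\omega\cup\{t\}$ inserts a jump of $N$ at time $t$, i.e. adds the point $\Lambda_t(\omega_2)$ to $\omega_1$, and this is precisely the identification your change of variables produces, with the flat pieces of $\Lambda$ harmless because they carry no $\lambda_t\,dt$-mass.
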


\subsection{The main result}

\noindent
In this section we present our main result concerning the computation of the quantity
$$ \E\left[\hat L_T h\left(L_T\right)\right],$$
where $h:\real_+\to \real_+$ is a Borelian map with $\E[h(L_T)]<\infty$ and where $L_T$ and $\hat L_T$ are respectively defined in (\ref{eq:L}) and (\ref{eq:lossbis}). We set
\begin{equation}
\label{eq:definitionvarphi}
\varphi_\lambda^h(x):=\E\left[h(L_T+x)\vert \F_T^\lambda\right], \quad x\in \real_+.
\end{equation}
It might be surprising at first glance to consider the conditional expectation given $\lambda$ in the building block. In fact, as the intensity $\lambda$ of $N$ is random, it can be compared to a Black-Scholes model with independent stochastic volatility. In that context the Black-Scholes formula would be written in terms of the conditional law of the terminal value of the stock given the volatility (which would simply be a lognormal distribution with variance given by the volatility). Recall that for the insurance contract presented in Section \ref{section:stoploss}, $h:=\textbf{1}_{[K,M]}$ and thus $\varphi_\lambda^h$ coincides with the conditional distribution function of $L_T$.\\\\
 \noindent
Before turning to the statement and the proof of the main result, note that
\begin{equation}
\label{eq:Lbis}
\hat L_T =\int_0^T \hat Z_s dN_s,
\end{equation}
with
\begin{equation}
\label{eq:Z}
\hat Z_s:=\sum_{i=1}^{+\infty} g(s,\Lambda_{s},\eps_i,\vartheta_i) e^{-\kappa (T-s)}\textbf{1}_{(\tau_{i-1},\tau_i]}(s), \quad s \in [0,T].
\end{equation}
Moreover on the set $\{\Delta_sN= 0\}$, one has
\begin{equation}\label{equ:Zs2}
\hat Z_s=g(s,\Lambda_s,\varepsilon_{1+N_s},\vartheta_{1+N_s})e^{-\kappa(T-s)}.
\end{equation}
\noindent
As $\Lambda$ is a continuous process, $\hat Z$ satisfies Relation (\ref{eq:processbien}), provided that $\displaystyle{\E\left[\int_0^T|\hat Z_t|dt\right]<+\infty}$.\\\\
\noindent
We start our analysis with the following lemma.

\begin{lemma}\label{lemme:temp1lemma}
Under Assumptions \ref{ass:1} and \ref{ass:2},  for any $t\in[0,T]$, it holds that
$$ \left(g(t,\Lambda_t,\eps_{1+N_t},\vartheta_{1+N_t}) e^{-\kappa(T-t)},L_{T}(\cdot \cup \{t\}),\lambda_t\right) \eqlaw \left(g(t,\Lambda_t,\overline\eps,\overline\vartheta) e^{-\kappa (T-t)}, L_T + f(t,\Lambda_{t},\overline \eps) e^{-\kappa (T-t)},\lambda_t\right). $$
\end{lemma}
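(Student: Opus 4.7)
The plan is as follows. Since $t$ is deterministic and $N$ is a Cox process with a.s.\ finite intensity, $\P(\Delta_t N>0)=0$, so by \eqref{equ:Zs2} we may assume $\hat Z_t=g(t,\Lambda_t,\eps_{1+N_t},\vartheta_{1+N_t})e^{-\kappa(T-t)}$ almost surely. Moreover, almost surely $t$ differs from every jump time $\tau_i$, so $\omega\cup\{t\}$ inserts a fresh jump at $t$ strictly between $\tau_{N_t(\omega)}$ and $\tau_{N_t(\omega)+1}$ without coincidence; in particular $N_t(\omega\cup\{t\})=N_t(\omega)+1$ and $N_T(\omega\cup\{t\})=N_T(\omega)+1$.

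Next, I would expand $L_T(\omega\cup\{t\})$ by re-indexing the jumps of the enlarged configuration. In the canonical construction each jump carries the $\eps$ and $\vartheta$ whose index matches its rank in the ordered sequence of jump times; this gives
\begin{equation*}
L_T(\omega\cup\{t\})=\sum_{i=1}^{N_t}f(\tau_i,\Lambda_{\tau_i},\eps_i)e^{-\kappa(T-\tau_i)}+f(t,\Lambda_t,\eps_{N_t+1})e^{-\kappa(T-t)}+\sum_{i=N_t+1}^{N_T}f(\tau_i,\Lambda_{\tau_i},\eps_{i+1})e^{-\kappa(T-\tau_i)}.
\end{equation*}
Crucially, the pair $(\eps_{N_t+1},\vartheta_{N_t+1})$ appears simultaneously in $\hat Z_t$ and in the middle term above, mirroring the shared role of $(\overline\eps,\overline\vartheta)$ on the right-hand side of the lemma.

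The key step is to condition on $\mathcal G:=\F_T^\lambda\vee\F_\infty^{\mathcal C}$, which determines $\lambda$, $\Lambda$, the $\tau_i$'s and the integers $N_t$, $N_T$. By Assumption \ref{ass:1}, conditional on $\mathcal G$, the sequence $((\eps_i,\vartheta_i))_{i\in\N^*}$ remains i.i.d.\ of law $\mu$. I would then apply the following elementary identity: if $(Y_i)_{i\geq 1}$ is i.i.d.\ of law $\mu$ and $k\in\N$ is fixed, then the sequence $(Y_1,\dots,Y_k,\overline Y,Y_{k+1},Y_{k+2},\dots)$ obtained by inserting an independent copy $\overline Y\sim\mu$ at position $k+1$ has the same joint distribution as $(Y_1,Y_2,\dots)$ itself. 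Applied with $k=N_t$, which is deterministic given $\mathcal G$, this lets us substitute $(\eps_{N_t+1},\vartheta_{N_t+1})$ by $(\overline\eps,\overline\vartheta)$ and simultaneously relabel $(\eps_{N_t+2},\dots,\eps_{N_T+1})$ as $(\eps_{N_t+1},\dots,\eps_{N_T})$. Under this substitution, the middle term of $L_T(\cdot\cup\{t\})$ becomes $f(t,\Lambda_t,\overline\eps)e^{-\kappa(T-t)}$, the third sum collapses into the tail of $L_T$, and $\hat Z_t$ becomes $g(t,\Lambda_t,\overline\eps,\overline\vartheta)e^{-\kappa(T-t)}$, so the conditional joint law of the triple on the left equals that on the right; integrating against any bounded $\mathcal G$-measurable functional of $\lambda_t$ (which is $\mathcal G$-measurable) then yields the claimed equality in law of the full triples.

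The main subtlety lies in the rigorous handling of the index-shift, since $N_t$ is itself a random integer. Conditioning on $\mathcal G$ is precisely what freezes $N_t$ and makes the elementary identity above directly applicable; an application of Fubini, justified by Assumption \ref{ass:1} (the sequence $((\eps_i,\vartheta_i))$ is independent of $\lambda$ and of $\mathcal C$), then lifts the conditional statement to the unconditional one. No other step should present difficulty, as the remainder is a bookkeeping verification on the re-labelling of jumps.
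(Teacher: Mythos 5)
Your proof is correct and follows essentially the same route as the paper: the same decomposition of $L_T(\cdot\cup\{t\})$ into the pre-$t$ sum, the inserted claim $f(t,\Lambda_t,\eps_{1+N_t})e^{-\kappa(T-t)}$, and the index-shifted tail, followed by conditioning on $\F_\infty^{\mathcal C}\otimes\F_T^\lambda$ and invoking Assumption \ref{ass:1}. The only difference is in the final step: the paper identifies the two laws via characteristic functions (using the conditional independence of $(\eps_{1+N_t},\vartheta_{1+N_t})$ from the shifted sum and the fact that this shifted sum has the same conditional law as $L_T$), whereas you package both facts into a single insertion-invariance identity for i.i.d.\ sequences, which is an equally valid and slightly more direct finish.
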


\begin{proof}
We set
$$ L_t := \sum_{i=1}^{N_t} f(\tau_i,\Lambda_{\tau_i},\eps_{i}) e^{-\kappa (T-\tau_i)}, \quad L^{+}_t := \sum_{i=1}^{N_t} f(\tau_i,\Lambda_{\tau_i},\eps_{i+1}) e^{-\kappa (T-\tau_i)}, \quad t\in [0,T].$$
We first {precise the value of} $L_T(\omega \cup {\{t\}})$ for a fixed element $t\in (0,T)$ and for $\omega:=(\omega_1,\omega_2)$ in $\Omega$ such that { $t\not\in\omega_1$ and $ \omega_1\cup\{t\}$ is well defined (the set of such $\omega$ has probability $1$)}. By definition, we have that
\[L_T(\omega \cup \{t\})\\
= \sum_{i=1}^{N_T(\omega \cup \{t\})} f(\tau_i(\omega \cup \{t\}),\Lambda_{\tau_i(\omega \cup \{t\})}(\omega_2),\eps_{i}(\omega_2)) e^{-\kappa (T-\tau_i(\omega \cup \{t\}))}\]
Note that one has
\[\forall\,i\in\mathbb N,\quad \tau_i(\omega\cup\{t\})=\begin{cases}\tau_i(\omega),& \text{if }i\leq N_t(\omega),\\
t,&\text{if }i=N_t(\omega)+1\\
\tau_{i-1}(\omega),&\text{if }i>N_t(\omega)+1.
\end{cases}\]
Therefore we can write $L_T(\omega\cup\{t\})$ as the sum of three terms as follows
\begin{equation}\label{decomposition of LT}\begin{split}L_T(\omega\cup\{t\})&=\sum_{i=1}^{N_t(\omega)} f(\tau_i(\omega_1),\Lambda_{\tau_i(\omega)}(\omega_2),\eps_{i}(\omega_2)) e^{-\kappa (T-\tau_i(\omega_1))}\\
&\quad +f(t,\Lambda_t(\omega_2),\varepsilon_{1+N_t(\omega)}(\omega_2))e^{-\kappa(T-t)}\\
&\quad+\sum_{i=N_t(\omega)+2}^{N_T(\omega)+1}f(\tau_{i-1}(\omega_1),\Lambda_{\tau_{i-1}(\omega)}(\omega_2),\varepsilon_i(\omega_2))e^{-\kappa(T-\tau_{i-1}(\omega_1))}.
\end{split}\end{equation}
By definition, the first term in the sum is just $L_t(\omega)$. Moreover, by a change of index we can write the third term as
\begin{equation}\label{third term}\sum_{i=N_t(\omega)+1}^{N_T(\omega)}f(\tau_i(\omega),\Lambda_{\tau_i(\omega)}(\omega_2),\varepsilon_{i+1}(\omega_2))\mathrm{e}^{-\kappa(T-\tau_{i}(\omega))}=L^{+}_T(\omega)-L^{+}_t(\omega).\end{equation}
Therefore by \eqref{decomposition of LT} the following equality holds almost surely
\begin{equation}\label{equ:star}f(t,\Lambda_t,\eps_{1+N_t}) e^{-\kappa(T-t)}=(L_T(\ndot\cup\{t\})-L_t)-(L^{+}_T-L^{+}_t).\end{equation}
Moreover, from the decomposition formula \eqref{decomposition of LT} we also observe that $\varepsilon_{1+N_t}$ is independent of $L_t+ L^{+}_T- L^{+}_t$ given $\mathcal F_\infty^{\mathcal C}\otimes\mathcal F_T^\lambda$. In addition, by Assumption \ref{ass:1} the conditional law of $\varepsilon_{1+N_t}$ given $\mathcal F_\infty^{\mathcal C}\otimes\mathcal F_T^\lambda$ identifies with the law of $\overline{\eps}$ since $\mathcal F^\varepsilon$ is independent of $\mathcal F_T^\lambda$.\\\\
\noindent
We now compute the characteristic functions of the two random vectors of interest. Let $\chi$ be the characteristic function of the random vector \[\left(g(t,\Lambda_t,\eps_{1+N_t},\vartheta_{1+N_t}) e^{-\kappa(T-t)},L_{T}(\cdot \cup \{t\}),\lambda_t\right).\]
Let $(u_1,u_2,u_3)\in\mathbb R^3$.  One has
\[\begin{split}&\quad\;\chi(u_1,u_2,u_3):=\E\left[e^{i u_1 g(t,\Lambda_t,\eps_{1+N_t},\vartheta_{1+N_t}) e^{-\kappa(T-t)} + i u_2 L_{T}(\ndot \cup \{t\}) + iu_3 \lambda_t}\right]\\
&=\E\left[e^{iu_3 \lambda_t} e^{i u_1 g(t,\Lambda_t,\eps_{1+N_t},\vartheta_{1+N_t}) e^{-\kappa(T-t)} + i u_2 \left(L_{t} + e^{-\kappa (T-t)} (f(t,\Lambda_{t}, \eps_{1+N_t}) +  L^{+}_T - L^{+}_t\right)}\right]\\
&=\E\left[e^{iu_3 \lambda_t} e^{i u_2 (L_t+L^{+}_T-L^{+}_t)} e^{iu_2 e^{-\kappa (T-t)} f(t,\Lambda_{t},\eps_{1+N_t}) } e^{iu_1 e^{-\kappa (T-t)} g(t,\Lambda_t,\eps_{1+N_t},\vartheta_{1+N_t}) }\right].
\end{split}\]
Since $\varepsilon_{1+N_t}$ and $\vartheta_{1+N_t}$ are independent of $L_t+L^{+}_T-L^{+}_t$ given $\mathcal F_\infty^{\mathcal C}\otimes\mathcal F_T^\lambda$, we obtain that
\[\chi(u_1,u_2,u_3)=
\mathbb E\left[e^{iu_3 \lambda_t} e^{iu_2 e^{-\kappa (T-t)} f(t,\Lambda_{t},\overline{\varepsilon}) } e^{iu_1e^{-\kappa (T-t)} g(t,\Lambda_{t},\overline{\varepsilon},\overline{\vartheta}) } \mathbb E\Big[e^{i \mu_2 (L_t+ L^{+}_T-L^{+}_t)}\,\Big|\,\mathcal F^{\mathcal C}_\infty\otimes\mathcal F_T^\lambda\Big]\right],\]
where we also use the fact that the probability law of $(\varepsilon_{1+N_t},\vartheta_{1+N_t})$ given $\mathcal F_\infty^{\mathcal C}\otimes\mathcal F^\lambda_T$ coincides with $\mu$ (which, we recall, is the probability law of $(\overline{\varepsilon},\overline{\vartheta})$).  Moreover, from \eqref{third term} we observe that $L_t+{L}^{+}_T-{L}^{+}_t$ has the same law as $L_T$ conditioned on $\mathcal F_\infty^{\mathcal C}\otimes\mathcal F_T^\lambda$. Therefore, we obtain
\[\begin{split}\chi(u_1,u_2,u_3)&=\mathbb E[e^{iu_3\lambda_t}e^{iu_2 e^{-\kappa(T-t)}f(t,\Lambda_t,\overline{\varepsilon})} e^{iu_1e^{-\kappa(T-t)}g(t,\Lambda_t,\overline{\varepsilon},\overline{\vartheta})} e^{iu_2L_T}]\\
&=\mathbb E[e^{iu_2e^{-\kappa(T-t)}f(t,\Lambda_t,\overline{\varepsilon})}e^{iu_1(e^{-\kappa(T-t)}g(t,\Lambda_t,\overline{\varepsilon},\overline{\vartheta})+L_T)}e^{iu_3\lambda_t}],
\end{split}\]
which shows that $\chi$ coincides with the characteristic function of the vector
\[\left(g(t,\Lambda_t,{\overline{\varepsilon}},{\overline{\vartheta}}) e^{-\kappa (T-t)}, L_T + f(t,\Lambda_{t},{\overline{\varepsilon}}) e^{-\kappa (T-t)},\lambda_t\right).\]
The lemma is thus proved.
\end{proof}

We now turn to the statement and the proof of the main result of this paper.

\begin{theorem}\label{th:main}
Recall that $(\eps_i,\vartheta_i)_{i \in \mathbb{N}^*}$ and $(\overline \eps,\overline \vartheta)$ are i.i.d. with common law $\mu$. Under the Assumptions \ref{ass:1} and \ref{ass:2}, it holds that
\begin{align}\label{eq:pricingSL2}
& \E\left[\hat L_T h\left(L_T\right)\right] \nonumber\\
&=\int_0^T e^{-\kappa (T-t)}  \E\left[ g(t,\Lambda_t,\overline \eps, \overline \vartheta)  \, \lambda_t \, \varphi_{\lambda}^h\left(f(t,\Lambda_{t},\overline \eps) e^{-\kappa (T-t)}\right) \right] dt\nonumber\\
&=\int_{\mathbb R_+^2}\int_0^T  e^{-\kappa (T-t)} \E\left[ g(t,\Lambda_t, x,y)  \, \lambda_t \, \varphi_{\lambda}^h\left(f(t,\Lambda_{t},x) e^{-\kappa (T-t)}\right) \right] \mu(dx,dy) \, dt,
\end{align}
where $\hat L_T$ is defined in (\ref{eq:lossbis}) and  the mapping $\varphi_\lambda^h(x):=\E\left[h(L_T+x)\vert \F_T^\lambda\right]$ is defined in (\ref{eq:definitionvarphi}).
\end{theorem}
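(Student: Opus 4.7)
The strategy is to combine the Malliavin integration by parts formula (Lemma \ref{Lemma:IPP}) with the distributional identity provided by Lemma \ref{lemme:temp1lemma}, using the representation $\hat L_T = \int_0^T \hat Z_s \, dN_s$ from (\ref{eq:Lbis}).

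First I would apply Lemma \ref{Lemma:IPP} with $F:=h(L_T)$ (which is bounded if $h$ is, else one truncates by $h\wedge n$ and passes to the limit using $\E[h(L_T)]<\infty$ together with boundedness of $g$) and $u_s:=\hat Z_s$. One needs to check that $\hat Z$ satisfies (\ref{eq:processbien}); this follows from boundedness of $g$ (so that $\E[\int_0^T|\hat Z_t|dt]<\infty$) and the observation, already noted after (\ref{equ:Zs2}), that $\hat Z$ is predictable with respect to the appropriate filtration. Taking expectation of (\ref{eq:IPP}) and using the tower property yields
\begin{equation*}
\E\left[\hat L_T h(L_T)\right] = \E\left[\int_0^T \hat Z_t \, h\bigl(L_T(\cdot\cup\{t\})\bigr)\,\lambda_t \, dt\right].
\end{equation*}

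Next I would apply Fubini (legitimate since $h$ and $g$ are bounded and $\E[\Lambda_T]<\infty$ after a localization if needed) to commute $\int_0^T$ and $\E$. Since $\P[\Delta_t N=0]=1$ for each fixed $t\in[0,T]$, the identity (\ref{equ:Zs2}) allows one to replace $\hat Z_t$ by $g(t,\Lambda_t,\eps_{1+N_t},\vartheta_{1+N_t})e^{-\kappa(T-t)}$ inside the time integral. This puts the expectation in the form
\begin{equation*}
\int_0^T e^{-\kappa(T-t)} \E\!\left[ g(t,\Lambda_t,\eps_{1+N_t},\vartheta_{1+N_t}) \, h\bigl(L_T(\cdot\cup\{t\})\bigr)\,\lambda_t\right] dt.
\end{equation*}

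The crucial step is now to invoke Lemma \ref{lemme:temp1lemma}: the three-dimensional vector $(g(t,\Lambda_t,\eps_{1+N_t},\vartheta_{1+N_t})e^{-\kappa(T-t)},\, L_T(\cdot\cup\{t\}),\, \lambda_t)$ has the same joint law as $(g(t,\Lambda_t,\overline\eps,\overline\vartheta)e^{-\kappa(T-t)},\, L_T+f(t,\Lambda_t,\overline\eps)e^{-\kappa(T-t)},\,\lambda_t)$. However, here we need the identity in distribution jointly with the full path of $\lambda$, not only $\lambda_t$. Extending Lemma \ref{lemme:temp1lemma} to include $\F_T^\lambda$ is essentially costless because the proof of that lemma actually conditions on $\F_\infty^{\mathcal{C}}\otimes\F_T^\lambda$ throughout; so the substitution is in fact valid conditionally on $\F_T^\lambda$. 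This allows one to replace $h(L_T(\cdot\cup\{t\}))$ by $h(L_T+f(t,\Lambda_t,\overline\eps)e^{-\kappa(T-t)})$, take the conditional expectation with respect to $\F_T^\lambda$, and recognize the building block $\varphi_\lambda^h(f(t,\Lambda_t,\overline\eps)e^{-\kappa(T-t)})$ inside the integrand. The first equality of (\ref{eq:pricingSL2}) follows. The second equality is then obtained by using Assumption \ref{ass:1}: since $(\overline\eps,\overline\vartheta)$ is independent of $\F_T^\lambda$ with law $\mu$, one integrates over $\mu(dx,dy)$.

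\textbf{Main obstacle.} The delicate point is the joint distributional upgrade in the application of Lemma \ref{lemme:temp1lemma}: the statement as given gives the three-dimensional identity only with $\lambda_t$, whereas the argument requires the identity in distribution conditional on the whole $\sigma$-field $\F_T^\lambda$ (needed to extract $\varphi_\lambda^h$). One must therefore revisit the conditioning in that lemma's proof and verify that the substitution goes through $\F_T^\lambda$-conditionally. The measurability/integrability conditions underlying Fubini and the application of Lemma \ref{Lemma:IPP} are secondary but must be checked, in particular that $\hat Z$ fulfills (\ref{eq:processbien}) and that $F=h(L_T)$ can be assumed bounded via monotone approximation.
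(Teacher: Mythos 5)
Your proposal is correct and follows essentially the same route as the paper: the representation (\ref{eq:Lbis}), the integration by parts formula of Lemma \ref{Lemma:IPP}, Relation (\ref{equ:Zs2}) on $\{\Delta_t N=0\}$, then Lemma \ref{lemme:temp1lemma} followed by conditioning to produce $\varphi_\lambda^h$ and integration against $\mu$. One remark: the ``main obstacle'' you flag is not actually an obstacle — since the entire integrand has the form $a\,h(b)\,c$, a Borel function of the triple appearing in Lemma \ref{lemme:temp1lemma}, the unconditional equality in law already justifies the substitution, and $\varphi_\lambda^h$ is then extracted afterwards, on the original space, by conditioning on $\F_T^\lambda\vee\sigma(\overline\eps,\overline\vartheta)$ and using that $(\overline\eps,\overline\vartheta)$ is independent of $\F_T^\lambda\vee\sigma(L_T)$ (Assumption \ref{ass:1} and the product construction), so no $\F_T^\lambda$-conditional upgrade of the lemma is required, even though the stronger conditional version you propose is indeed provable from its proof.
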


\begin{proof}
Assumptions \ref{ass:1} and \ref{ass:2} are in force.
Using the {relation \eqref{eq:Lbis}} and the integration by parts formula on the Poisson space \eqref{eq:IPP}, it holds that
\[\begin{split}&\quad\;\E\left[\hat L_T h\left(L_T\right) \right]=\E\left[\E\left[\hat L_T h\left(L_T\right)\middle\vert \F^{\eps,\vartheta} \vee \F_T^\lambda \right]\right]\\
&=\E\left[\E\left[h\left(L_T\right) \int_0^T Z_t dN_t\middle\vert \F^{\eps,\vartheta} \vee \F_T^\lambda \right]\right]=\E\left[\int_0^T Z_t h\left(L_T(\cdot \cup {\{t\}})\right) \lambda_t dt \right]
\end{split}\]
By Relation \eqref{equ:Zs2} and the fact that the set $\{\Delta_t N\neq 0\}$ is negligeable, we obtain
\[\begin{split}\E\left[\hat L_T h\left(L_T\right) \right]&=\E\left[\int_0^T g(t,\Lambda_t,\eps_{1+N_t},\vartheta_{1+N_t}) e^{-\kappa (T-t)}  h\left(L_T(\cdot \cup \{t\})\right) \lambda_t dt \right]\\
&=\int_0^T \E\left[g(\Lambda_t,\eps_{1+N_t},\vartheta_{1+N_t}) e^{-\kappa (T-t)}  h\left(L_T(\cdot \cup \{t\})\right) \lambda_t\right] dt .\end{split}\]
{Finally, by Lemma \ref{lemme:temp1lemma}, the above formula leads to }
\[\E\left[L_T \ind{L_T \in [K,M]} \right]=\int_0^T\E\left[ g(t,\Lambda_t,\overline{\varepsilon},\overline{\vartheta}) e^{-\kappa (T-t)} h\left(L_T + f(t,\Lambda_{t},\overline{\varepsilon}) e^{-\kappa (T-t)}\right) \lambda_t \right] dt.\]
Since $\overline{\varepsilon}$ is independent of $\mathcal F^\lambda_T\vee\mathcal F^{\varepsilon}$, one has
\[\mathbb E\left[h\left(L_T + f(\Lambda_{t},\overline{\varepsilon}) e^{-\kappa (T-t)} \right)\,\middle\vert\,\mathcal F_T^\lambda\vee\sigma(\overline{\varepsilon})\right]=\varphi_\lambda^h\left(f(t,\Lambda_t,\overline{\varepsilon})e^{-\kappa(T-t)}\right).\]
Therefore
\[\begin{split}&\quad\;\E\left[\hat L_T h\left(L_T\right) \right]=\int_0^T\E\left[ g(t,\Lambda_t,\overline \eps, \overline \varepsilon) e^{-\kappa (T-t)} \lambda_t \varphi_{\lambda}^h\left(f(t,\Lambda_{t},\overline \eps) e^{-\kappa (T-t)},f(t,\Lambda_{t},\overline \eps) e^{-\kappa (T-t)}\right) \right] dt\\
&=\int_{\mathbb R_+^2}\int_0^T  e^{-\kappa (T-t)} \E\left[ g(t,\Lambda_t, x,y)  \, \lambda_t \, \varphi_{\lambda}^h\left(f(t,\Lambda_{t},x) e^{-\kappa (T-t)}\right) \right] dt\,\mu(dx,dy),
\end{split}\]
as asserted by the theorem.
\end{proof}

\begin{remark}\begin{enumerate}
\item
Note that from Equality (\ref{eq:pricingSL2}), it is clear that our approach only requires the knowledge of the conditional law of $L_T$ given $\lambda$ (\textit{via} the mapping $\varphi_\lambda$) and not the one of the pair $(L_T,\hat L_T)$. 
This seems to be particularly useful for the numerical approximation of the aforementioned expectation.
\item The theorem above provides us the relation of the pricing formula with respect to the intensity process $(\lambda_t)_{t\geq 0}$ of the counting process.
\end{enumerate}
\end{remark}

Relation (\ref{eq:pricingSL2}) allows us to give a lower (respectively upper) bound on the price if $h$ is assumed to be convex (respectively concave).
\begin{corollary}
Under the assumptions of Theorem \ref{th:main}, it holds that :
\begin{itemize}
\item[(i)] if $h$ is convex, then
\begin{align*}
&\E\left[\hat L_T h\left(L_T\right)\right] \\
&\geq \int_{\mathbb R_+^2}\int_0^T  e^{-\kappa (T-t)} \E\left[ g(t,\Lambda_t, x,y)  \, \lambda_t \, h\left(\E[L_T\middle\vert \mathcal F_T^\lambda]+f(t,\Lambda_{t},x) e^{-\kappa (T-t)}\right) \right] \mu(dx,dy) \, dt.
\end{align*}
\item[(i)] if $h$ is concave, then
\begin{align*}
&\E\left[\hat L_T h\left(L_T\right)\right] \\
&\leq \int_{\mathbb R_+^2}\int_0^T  e^{-\kappa (T-t)} \E\left[ g(t,\Lambda_t, x,y)  \, \lambda_t \, h\left(\E[L_T\middle\vert \mathcal F_T^\lambda]+f(t,\Lambda_{t},x) e^{-\kappa (T-t)}\right) \right] \mu(dx,dy) \, dt.
\end{align*}
\end{itemize}
\end{corollary}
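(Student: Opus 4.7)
The plan is to start from the exact formula (\ref{eq:pricingSL2}) of Theorem \ref{th:main} and apply Jensen's inequality conditionally on $\F_T^\lambda$. All the heavy lifting has already been done in the theorem; the only job of the corollary is to trade the building block $\varphi_\lambda^h$ for a simpler (but no longer exact) quantity depending only on the conditional mean $\E[L_T\mid\F_T^\lambda]$.

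The first step is to observe that, for each fixed $(t,x)\in[0,T]\times\R_+$, the random variable $Z_{t,x}:=f(t,\Lambda_t,x)e^{-\kappa(T-t)}$ is $\F_T^\lambda$-measurable, since $f$ is deterministic and $\Lambda_t$ is $\F_T^\lambda$-measurable. Consequently, the random evaluation of the building block can be rewritten as
\begin{equation*}
\varphi_{\lambda}^h(Z_{t,x})=\E\bigl[h(L_T+z)\mid\F_T^\lambda\bigr]\Big|_{z=Z_{t,x}}=\E\bigl[h(L_T+Z_{t,x})\mid\F_T^\lambda\bigr],
\end{equation*}
the last equality holding because $Z_{t,x}$ is $\F_T^\lambda$-measurable. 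This brings us into the setting of the standard conditional Jensen inequality.

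In the second step, I would apply Jensen's inequality conditionally on $\F_T^\lambda$: if $h$ is convex (resp.\ concave),
\begin{equation*}
\E\bigl[h(L_T+Z_{t,x})\mid\F_T^\lambda\bigr]\;\geq\;(\text{resp.}\;\leq)\;h\bigl(\E[L_T\mid\F_T^\lambda]+Z_{t,x}\bigr),
\end{equation*}
where we used once more that $Z_{t,x}$ factorizes out of the conditional expectation. Since $g\geq 0$, $\lambda_t\geq 0$ and $e^{-\kappa(T-t)}>0$, multiplying by $g(t,\Lambda_t,x,y)\lambda_t e^{-\kappa(T-t)}$ preserves the inequality almost surely. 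Taking expectation, then integrating over $t\in[0,T]$ and against $\mu(dx,dy)$ by Fubini (justified by boundedness of $g$, the integrability of $h(L_T)$, and $\E[\Lambda_T]<+\infty$ when combined with Assumption~\ref{ass:2}), and finally invoking (\ref{eq:pricingSL2}) on the left-hand side yields the two claimed bounds.

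The main potential obstacle is not a deep one: it is the bookkeeping around the random argument of $\varphi_\lambda^h$. One must be careful that $\varphi_\lambda^h$, which is defined pointwise as a conditional expectation, admits a jointly measurable version allowing substitution of an $\F_T^\lambda$-measurable random variable for its argument and commuting with conditional expectation. Any standard regularization (e.g.\ a regular conditional law of $L_T$ given $\F_T^\lambda$, whose existence is immediate since $L_T$ takes values in the Polish space $\R_+$) handles this. Beyond that, the proof is a one-line application of conditional Jensen followed by Fubini.
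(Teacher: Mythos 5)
Your proposal is correct and follows essentially the same route as the paper: conditional Jensen's inequality gives $\varphi_\lambda^h(x)\geq h\left(\E\left[L_T\middle\vert\F_T^\lambda\right]+x\right)$ (reversed if $h$ is concave), and the bound is then plugged into Relation (\ref{eq:pricingSL2}). The extra measurability and Fubini bookkeeping you mention is fine but not something the paper dwells on.
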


\begin{proof}
We prove (i) as statement (ii) follows the same line. As $h$ is assumed to be convex, Jensen's inequality implies that
$$ \varphi_\lambda^h(x) \geq h\left(\E[L_T\middle\vert \mathcal F_T^\lambda]+x\right), \quad x \in \real_+. $$
The result is then obtained by plugging this estimate in Relation (\ref{eq:pricingSL2}).
\end{proof}

\section{Applications and examples} 
\label{section:examples}

In this section, we provide some application examples of our main result, in particular for the (generalized) stop-loss contract. Such explicit computations will also be useful for the CDO tranches and expected shortfall risk measure.

\subsection{Computation of the building block}

We first focus on the building block $\varphi_\lambda^{h}$ (defined in (\ref{eq:definitionvarphi})) when $h:=\ind{[K,M]}$ :
$$\varphi_{\lambda}(x) := \varphi_\lambda^{h}(x) = \P\left[L_T \in [K-x,M-x]\vert \F_T^\lambda\right], \quad x\in \real_+$$
which corresponds to the payoff of a stop-loss contract or a CDO tranche.
Let $\F^{\eps}:=\sigma(\eps_i,\; i\in\N^*)$. For any $i\in\N^*$, we set $X_i:=f(\tau_i,\Lambda_{\tau_i},\eps_i)$ and $x$ in $\real_+$, we have
\begin{align}
\label{eq:briqueexemple}
&\P\left[L_T \in [K-x,M-x] \middle\vert \F^\lambda_T \vee \F^\eps\right]\nonumber\\
&=\P\left[\sum_{i=1}^{N_T} X_i e^{\kappa \tau_i} \in [(K-x) e^{\kappa T},(M-x) e^{\kappa T}] \Bigg\vert \F^\lambda_T \vee \F^\eps\right]\nonumber\\
&=\sum_{k=1}^{+\infty} \E\left[ \sum_{i=1}^{k} X_i e^{\kappa \tau_i} \in [(K-x) e^{\kappa T},(M-x) e^{\kappa T}] \Bigg\vert N_T=k,\F^\lambda_T \vee \F^\eps \right] \P[N_T=k\vert \F_T^\lambda]\nonumber\\
&= \sum_{k=1}^{+\infty} e^{-\int_0^T \lambda_s ds } \int_{\mathcal S_k} \P \left[\sum_{i=1}^{k} X_i e^{\kappa t_i} \in [(K-x) e^{\kappa T} ,(M-x) e^{\kappa T} ]  \Bigg\vert \F^\lambda_T \vee \F^\eps\right] \lambda_{t_1} dt_1\cdots \lambda_{t_k} dt_k\nonumber\\
&= \sum_{k=1}^{+\infty} e^{-\int_0^T \lambda_s ds } \int_{\mathcal S_k} \int_{\real_+^k} \ind{\sum_{i=1}^{k} x_i e^{\kappa t_i} \in [(K-x) e^{\kappa T} ,(M-x) e^{\kappa T} ]}  \mathcal{L}_{X_{(1:k)}}^{\vert \lambda}(dx_1,\ldots,dx_k) \lambda_{t_1} dt_1\cdots \lambda_{t_k} dt_k,
\end{align}
where $\mathcal S_k:=\{0<t_1<\cdots < t_k \leq T\}$, $X_{(1:k)}:=(X_1,\ldots,X_k)$ and
$$ \mathcal{L}_{X_{(1:k)}}^{\vert \lambda}(dx_1,\ldots,dx_k):=\P\left[X_{(1:k)} \in (dx_1,\ldots,dx_k)\Bigg\vert \F^\lambda_T\right].$$
It just remains to compute the joint distribution of the claims $X_{(1:k)}$ in different situations. In particular, we provide below an explicit example.\\\\
\noindent
\textbf{Model on $\varepsilon_i$ : } We assume that $(\eps_i)_{i\in\N^*}$ are i.i.d. random variables with Pareto distributions $\mathcal{P}(\alpha_{\varepsilon},\beta_{\varepsilon})$ with $(\alpha_{\varepsilon},\beta_{\varepsilon}) \in (\real^*_+)^2$ whose density $\psi_\eps$ is defined as
$$ \psi_\eps(z)=\left(\beta_\eps\frac{\alpha_\eps^{\beta_\eps}}{z^{\beta_\eps+1}}\right){\bf 1}_{\{z \geq \alpha_\eps\}}dz.$$
Choosing $f(t,\ell,x):= \sqrt{\frac{\ell}{t}}x$, the conditional distribution $\mathcal{L}_{X_{(1:k)}}^{\vert \lambda}(dx_1,\ldots,dx_k)$ in Relation (\ref{eq:briqueexemple}) becomes
\begin{align*}
&\mathcal{L}_{X_{(1:k)}}^{\vert \lambda}(dx_1,\ldots,dx_k)\\
&=\P\left[\left(\sqrt{\frac{\Lambda_{t_1}}{t_1}} \eps_1,\cdots,\sqrt{\frac{\Lambda_{t_k}}{t_k}} \eps_k\right) \in (dx_1,\ldots,dx_k)\Bigg\vert \F^\lambda_T\right]\\
&=\prod_{i=1}^k \P\left[\sqrt{\frac{\Lambda_{t_i}}{t_i}} \eps_i \in dx_i\Bigg\vert \F^\lambda_T\right]\\
&=\prod_{i=1}^k \left(\beta_\eps\frac{\left(\sqrt{\frac{t_i}{\Lambda_{t_i}}}\alpha_\eps\right)^{\beta_\eps}}{z_i^{\beta_\eps+1}}\right){\bf 1}_{\left\{z_i \geq \sqrt{\frac{t_i}{\Lambda_{t_i}}} \alpha_\eps\right\}}dz_i.
\end{align*}
The next step to compute the right-hand side of Relation (\ref{eq:pricingSL2}) is to specify the joint law of $(\eps_1,\vartheta_1)$.\\\\
\noindent
\textbf{Model on $(\varepsilon_i,\vartheta_i)$ : } We assume that $(\varepsilon_i,\vartheta_i)_{i\in\N^*}$ are i.i.d. random vectors, with marginal distributions following Pareto distributions $\mathcal{P}(\alpha_{\varepsilon},\beta_{\varepsilon})$ and $\mathcal{P}(\alpha_{\vartheta},\beta_{\vartheta})$ (for a set of parameters ${\alpha_\eps, \, \beta_\eps, \, \alpha_\vartheta, \, \beta_\vartheta > 0}$) respectively. The dependence structure is modeled through a Clayton copula with parameter $\theta>0$.
We recall that the Clayton copula is $C(u,v):=\left(u^{-\theta}+v^{-\theta}-1\right)^{-\frac{1}{\theta}}$ and the density $c$ of the Clayton copula is given by $$c(u,v):= (1+\theta) (uv)^{-1-\theta}(u^{-\theta}+v^{-\theta}-1)^{-\frac{1}{\theta}-2}.$$
\noindent
The joint distribution of $\left(\varepsilon_1, \vartheta_1\right)$ is then given by
$$\mu(dx, dy)= c\left(F_{\varepsilon}(x), F_{\vartheta}(y)\right) \psi_{\varepsilon}(x) \psi_{\vartheta}(y) dx dy,$$
with $F_\eps(z)=\left(1- \frac{\alpha_\eps^{\beta_\eps}}{z^{\beta_\eps}}\right)$ and $F_\vartheta(z)=\left(1- \frac{\alpha_\vartheta^{\beta_\vartheta}}{z^{\beta_\vartheta}}\right)$.\\\\
\noindent
\textbf{Joint law of $(\lambda_t,\Lambda_t)$ : } The final step in the computation of Relation (\ref{eq:pricingSL2}) is to make precise the joint law of $(\lambda_t,\Lambda_t)$. More precisely, we need to compute $$\E   \left[ g(t,\Lambda_t,x, y) \, \lambda_t \, \varphi_{\lambda}\left(K-f(t,\Lambda_{t},x) e^{-\kappa (T-t)},M-f(t,\Lambda_{t},x) e^{-\kappa (T-t)}\right) \right]$$
\noindent
Assume  the intensity process $ (\lambda_t)_{t\in[0, T]}$ is given by
$$\lambda_t= \lambda_0 \exp(2 \beta W_t)$$   where $W$ is a Brownian motion, and $\beta$ a constant (non null).
Then the cumulative intensity  is $$\Lambda_t=\lambda_0 \int_0^t \exp(2 \beta W_s)ds .$$
By  Borodin and Salminen \cite{borodin2012handbook} (page 169), the joint law of $(\Lambda_t, W_t)$ is given by
$$ \mathbb P \left( \Lambda_t \in dv ,   W_t \in dz          \right) = \frac{\lambda_0 |\beta|}{2v} \exp\left(  -  \frac{\lambda_0 (1+ e^{2\beta z})}{2 \beta^2 v}\right) i_{\beta^2t/2}\left(    \frac{\lambda_0 e^{\beta z}}{\beta^2v} \right) \ind{v>0}   dv dz$$
where the function
$$i_y(z)=\frac{z e^{\frac{\pi ^2}{4y}}}{\pi \sqrt{\pi y}} \int_0^{\infty} \exp\left(-z ch(x)  -\frac{x^2}{4y} \right) sh(x) \sin(\frac{\pi x}{2y}) dx.$$
The expectation term in the right-hand side of equation \eqref{eq:pricingSL2} is then

$$\E   \left[ g(t,\Lambda_t,x, y) \, \lambda_t \, \varphi_{\lambda}\left(K-f(t,\Lambda_{t},x) e^{-\kappa (T-t)},M-f(t,\Lambda_{t},x) e^{-\kappa (T-t)}\right) \right]$$
{\tiny $$= \int_{\mathbb R^2}  g(t,v,x, y)  {\tiny \varphi_{\lambda}
\left(K-\sqrt{\frac{v}{t}}x e^{-\kappa (T-t)},M-\sqrt{\frac{v}{t}}xe^{-\kappa (T-t)}\right) }
\frac{\lambda_0^2 |\beta|}{2v}  e^{2 \beta z} \exp\left(  -  \frac{\lambda_0 (1+ e^{2\beta z})}{2 \beta^2 v}\right) i_{\beta^2t/2}\left(    \frac{\lambda_0 e^{\beta z}}{\beta^2v} \right) \ind{v>0}    dv dz.$$}

\subsection{A Black-Scholes type formula for generalized Stop-Loss contracts in the Cramer-Lundberg}

As an illustration, we conclude our analysis by specifying our result in the classic Cramer-Lundberg model. More precisely, we assume that the Cox process is an homogeneous Poisson process with constant intensity $\lambda_0>0$ and set $h:=\textbf{1}_{[K,M]}$, with $K<M$. The building block reduces to the distribution function
\begin{equation}
\label{eq:buildblocklambdaconst}
\varphi_{\lambda_0}(x):=\varphi_{\lambda_0}^h(x)= \P\left[L_T \in [K-x,M-x] \right], \quad x\in \real_+.
\end{equation}
In that case we omit the dependency on $\Lambda$ for the mappings $f$ and $g$ (as $\Lambda_t=t \lambda_0)$.
\begin{corollary}
Under the assumptions of Theorem \ref{th:main}, it holds
\begin{align*}
& \E\left[\hat L_T \textbf{1}_{L_T\in [K,M]}\right]=\lambda_0 \int_0^T \int_{\real_+^2} e^{-\kappa (T-t)}  g(t,x,y) \, \varphi_{\lambda_0}\left(f(t,x) e^{-\kappa (T-t)}\right) \mu(dx,dy)dt,
\end{align*}
(recall that $\mu:=\mathcal{L}_{(\overline \eps,\overline \vartheta)}$).
\end{corollary}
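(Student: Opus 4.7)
The plan is to deduce this corollary directly from Theorem \ref{th:main} by specializing to the case where the intensity is deterministic and constant, $\lambda_t \equiv \lambda_0$, and the payoff function is $h = \mathbf{1}_{[K,M]}$. There is essentially no new probabilistic content to prove: everything follows from tracking how the building block and the integrand collapse in this degenerate setting.

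First I would observe that under the Cramer-Lundberg assumption, $\Lambda_t = \lambda_0 t$ is a deterministic function of $t$, and the filtration $\mathcal F_T^\lambda$ is trivial (or, more precisely, independent of everything else). Consequently, the conditional expectation defining the building block in \eqref{eq:definitionvarphi} reduces to the unconditional expectation, so $\varphi_\lambda^h(x) = \varphi_{\lambda_0}(x) = \P[L_T \in [K-x, M-x]]$, which is exactly \eqref{eq:buildblocklambdaconst}. Next, adopting the notational convention announced just before the statement of the corollary, I would write $f(t,x)$ in place of $f(t,\Lambda_t,x) = f(t, \lambda_0 t, x)$ and similarly $g(t,x,y)$ in place of $g(t,\Lambda_t,x,y)$; these are now deterministic functions of their arguments.

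Then I would plug these reductions into the formula \eqref{eq:pricingSL2} of Theorem \ref{th:main}. The integrand
\[
g(t,\Lambda_t,x,y) \, \lambda_t \, \varphi_\lambda^h\!\left(f(t,\Lambda_t,x)e^{-\kappa(T-t)}\right)
\]
becomes $\lambda_0 \, g(t,x,y) \, \varphi_{\lambda_0}\!\left(f(t,x) e^{-\kappa(T-t)}\right)$, which is a deterministic quantity. The outer expectation therefore disappears, leaving exactly
\[
\E\left[\hat L_T \mathbf 1_{L_T \in [K,M]}\right] = \lambda_0 \int_0^T \int_{\mathbb R_+^2} e^{-\kappa(T-t)} g(t,x,y)\, \varphi_{\lambda_0}\!\left(f(t,x) e^{-\kappa(T-t)}\right) \mu(dx,dy)\, dt,
\]
as desired.

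There is no genuine obstacle: the only point requiring minor care is making sure that Assumption \ref{ass:2} is trivially satisfied ($\Lambda_T = \lambda_0 T < \infty$) and that Assumption \ref{ass:1} is unaffected by taking $\lambda$ deterministic, so that Theorem \ref{th:main} applies. Everything else is a mechanical substitution.
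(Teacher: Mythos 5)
Your proof is correct and is exactly the argument the paper intends: the corollary is stated as a direct specialization of Theorem \ref{th:main} to a constant deterministic intensity $\lambda_t\equiv\lambda_0$ (so $\Lambda_t=\lambda_0 t$, the conditioning on $\F_T^\lambda$ becomes vacuous, and the outer expectation in \eqref{eq:pricingSL2} disappears), with $h=\textbf{1}_{[K,M]}$ giving the building block \eqref{eq:buildblocklambdaconst}. Nothing further is needed.
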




If we assume furthermore that $f(t,x)=g(t,x,y)=x$ and $\kappa=0$, then the loss process $L_T$ corresponds to the cumulated loss of the classic Cramer-Lundberg model. In this context, a huge literature deals with the computation of the ruin probability and related quantities, such as the discounted penalty function at ruin (Gerber-Shiu function). Others papers are concerned with the pricing of Stop-Loss contract. The pricing relies on the computation of a term of the form $\int_K^{M} y dF(y)$ with $F$ being the cumulative distribution function of the loss process $L_T$, and the discussion in the literature mainly focuses on  the derivation of the compound distribution function $F$ (usually calculated recursively, using Panjer recursion formula and numerical methods/approximations) \textit{cf.} \cite{panjer1981} and \cite{gerber1982}.
Our Malliavin {approach} provides another formula which reads as
\begin{equation}
\label{eq:OFCL}
\E\left[\hat L_T \textbf{1}_{L_T\in [K,M]}\right]=\lambda_0 T \int_{\real_+} x \, (F(M-x)-F(K-x)) \mu(dx).
\end{equation}
{Note that our result indeed coincides with the one obtained in \cite{gerber1982}. Indeed, if one translates in a general setting Formula \cite[(6)]{gerber1982} (as $\mu$ is constrained to have a finite support in $\mathbb{N}$ in \cite{gerber1982}) the distribution $F$ satisfies  
$$y dF(y)= \lambda_0 T \int_{\real_+} x  dF(y-x) \mu(dx),$$
from which one deduces that 
\begin{align*}
\int_K^M y dF(y) &= \lambda_0 T \int_K^M \int_{\real_+} x  \mu(dx) dF(y-x) \\
&= \lambda_0 T \int_{\real_+} x \int_K^M dF(y-x) \mu(dx)\\
&= \lambda_0 T \int_{\real_+} x (F(K-x)-F(M-x)) \mu(dx),
\end{align*}
which is exactly (\ref{eq:OFCL}).
}

\bibliographystyle{plain}
\bibliography{biblioHJR}

\end{document}